\documentclass[11pt]{article}
\usepackage{amsmath,amssymb}

\DeclareMathOperator{\arctanh}{arctanh}

\usepackage{float} 
\newcommand{\be}{\begin{equation}}
	\newcommand{\ee}{\end{equation}}
\newcommand{\ba}{\begin{array}}
	\newcommand{\ea}{\end{array}}
\newcommand{\bea}{\begin{eqnarray*}}
	\newcommand{\eea}{\end{eqnarray*}}
\newcommand{\bean}{\begin{eqnarray}}
	\newcommand{\eean}{\end{eqnarray}}
\newtheorem{theorem}{Theorem}[section]
\newcommand{\bth}{\begin{theorem}}
 
\newtheorem{lemma}{Lemma}[section]
\newcommand{\blem}{\begin{lemma}}
	\newcommand{\elem}{\end{lemma}}
\newtheorem{proposition}{Proposition}[section]
\newcommand{\bprop}{\begin{proposition}}
	\newcommand{\eprop}{\end{proposition}}
\newtheorem{remark}{Remark}[section]
\newcommand{\brem}{\begin{remark}}
	\newcommand{\erem}{\end{remark}}

\newcommand{\bno}{\begin{notation}}
	\newcommand{\eno}{\end{notation}}

\newcommand{\ce}{\text{ce}}
\newcommand{\se}{\text{se}}
\newcommand{\Ce}{\text{Ce}}

\newcommand{\Ke}{\text{Ke}}

\def\CC{\rm \hbox{C\kern-.56em\raise.4ex
		\hbox{$\scriptscriptstyle |$}\kern+0.5 em }}
\def\Box{\leavevmode\vbox{\hrule
		\hbox{\vrule\kern5pt\vbox{\kern5pt}%
			\vrule}\hrule}}

\usepackage{graphicx}
\usepackage{color}
\newcount\clegende \clegende=0
\def\legende#1{\global\advance \clegende by 1
	$$ \hbox{\hss\small Figure \the\clegende~: #1 \hss} $$}

\begin{document}
	\title{Spectral Analysis of a Quantum Waveguide with Elliptical Window.}
	\author{ H. NAJAR \thanks{Department of Mathematics, Faculty of Sciences Monastir, Monastir University, Tunisia. Research Laboratory: Algebra, Geometry and Spectral Theory: LR11ES53 (Email: hatemnajar@yahoo.fr )}
		\and  F. CHOGLE  \thanks{Department of Applied Mathematics and Sciences, Khalifa University of Science and Technology, 127788, Abu Dhabi, United Arab Emirates (Email: firoz.chogle@ku.ac.ae)}}
	\date{}
	
	\maketitle \abstract{We investigate the Dirichlet Laplacian in two spatial waveguides coupled through an elliptic window. The elliptic geometry breaks rotational symmetry and introduces anisotropy through the semi-axes of the aperture, which modifies the coupling of transverse modes and the low-lying spectrum. We prove that the operator has a finite number of discrete eigenvalues below the threshold of the essential spectrum and study their dependence on the geometric parameters of the ellipse. In contrast to the circular case, the elliptic setting gives rise to spectral effects such as eigenvalue splitting. Numerical simulations illustrate the variation of the first eigenvalues and the ground state with the window geometry.}

	\vskip1cm
	
	\noindent
	\textbf{AMS Classification:} 81Q10 (47B80, 81Q15) \newline
	\textbf{Keywords:} Quantum Waveguide, Schr\"{o}dinger operator, 
	bound states, Zaremba problem, Dirichlet Laplacian.\\
\textit{ This paper is dedicated to the memory of Oleg Olendeski, whose passion for research and generosity greatly influenced us. His contributions to this field were invaluable. His collaboration with H. N.  and his supervision of F. C. have had a positive impact on this work.}
\section{Introduction}

Boundary value problems for elliptic operators occupy a central position in both mathematical physics and the theory of partial differential equations. They arise naturally in the modeling of quantum confinement, wave propagation, and diffusion processes, while at the same time posing fundamental analytical questions related to regularity, spectral asymptotics, and the influence of geometry. In this work, we investigate the spectral properties of the Laplacian subject to mixed Dirichlet–Neumann boundary conditions, a classical setting that lies between purely Dirichlet and purely Neumann problems and exhibits features characteristic of both.

\subsection{Short description of the problem}

Let $\Omega$ be a two- or three-dimensional domain whose boundary is decomposed into two disjoint parts,
\[
\partial\Omega = \partial\Omega_D \cup \partial\Omega_N,
\]
on which homogeneous Dirichlet and Neumann boundary conditions are imposed, respectively. We consider the spectral problem for the Laplacian, namely the determination of eigenvalues $E_n$ and corresponding eigenfunctions $f_n({\bf r})$, $n=1,2,\ldots$, satisfying
\begin{equation}\label{Dirichlet1}
	\left.f_n({\bf r})\right|_{\partial\Omega_D} = 0,
\end{equation}
and
\begin{equation}\label{Neumann1}
	{\bf n}\cdot\nabla f_n({\bf r})\big|_{\partial\Omega_N} = 0,
\end{equation}
where ${\bf n}$ denotes the outward unit normal vector to the boundary. This mixed-boundary-value problem is commonly known as the \emph{Zaremba problem} \cite{Zaremba1}. It interpolates between the purely Dirichlet and purely Neumann cases and thus provides a natural framework for studying intermediate spectral regimes encountered in both physical and mathematical contexts.

A defining difficulty of the Zaremba problem is the behavior of solutions near the interface between the two types of boundary conditions. Denoting
$$
\Sigma = \overline{\partial\Omega_D} \cap \overline{\partial\Omega_N},
$$
one observes that standard elliptic regularity breaks down at $\Sigma$, even when $\Omega$ and the boundary partition are smooth. In the vicinity of this Dirichlet–Neumann junction, eigenfunctions typically exhibit singular behavior characterized by non-integer powers of the distance to the interface. As a result, solutions may fail to belong to $H^2(\Omega)$ and require a refined analytical description using weighted Sobolev spaces or Kondrat’ev-type asymptotic expansions \cite{Gri, Kond}.

Although the Laplacian with mixed boundary conditions has compact resolvent and hence a purely discrete spectrum in bounded domains, the presence of boundary singularities influences spectral asymptotics and the qualitative structure of eigenfunctions. Compared to homogeneous boundary conditions, the analysis of eigenvalue distributions, nodal sets, and geometric–spectral relations becomes more subtle, reflecting the interplay between geometry and boundary heterogeneity.

These analytical challenges motivate the use of numerical methods, which allow one to explore spectral properties in situations where explicit analysis is not possible. However, the numerical treatment of mixed boundary value problems is itself delicate, as it requires the accurate implementation of heterogeneous boundary conditions and the proper resolution of interface singularities, especially in higher-dimensional or unbounded settings.

Mixed Dirichlet–Neumann boundary conditions arise naturally in a variety of applications. Classical examples include heat conduction in partially insulated bodies \cite{Dowker1}, quantum billiards with partially reflective boundaries \cite{Wiersig1}, and wave propagation in systems exhibiting ray-splitting phenomena \cite{Prange1}. A particularly important class of examples is provided by quantum waveguides, where different boundary conditions are prescribed on different parts of the boundary. This setting naturally extends the Zaremba problem to unbounded domains and leads to new spectral phenomena.

\subsection{Quantum waveguides}

Quantum waveguides offer a rich testing ground for studying the interaction between geometry, boundary conditions, and spectral properties of differential operators. They have attracted considerable attention due to their relevance in nanostructures and electromagnetic wave propagation; see, for example, the monograph \cite{hurt} and the references therein.

Pioneering work by Exner and collaborators established many fundamental results on the spectral theory of quantum waveguides under various geometric and boundary configurations \cite{exner1, exner4, exner2, exner3}. Subsequent studies addressed discrete and random models \cite{Stokel, speis1, naj6}, further clarifying the influence of disorder and geometry.

It is now well understood that the spectral properties of waveguides are strongly affected by geometric perturbations. Curvature-induced bound states \cite{bulla1, exner4, exner2} and states generated by window coupling between straight waveguides \cite{exner2, hurt} are well-known examples. Waveguides with purely Neumann boundary conditions have also been investigated \cite{Krejcirik1, naz}. When Dirichlet and Neumann conditions coexist, additional spectral effects arise, including the emergence of bound states in otherwise straight geometries \cite{najrandom}. The influence of magnetic fields on such mixed-boundary configurations has also been studied \cite{Borisov13, Olendski2, Olendski3}. For the electric field influence see \cite{najpop}.

In two-dimensional straight Dirichlet waveguides, it was shown that the introduction of a Neumann window produces at least one bound state below the essential spectrum \cite{bulla1, exner2}. Such configurations model coupled semiconductor channels and have recently become experimentally accessible due to advances in nanoscale fabrication \cite{Hirayama1}. The number of bound states increases with the window length, while their energies decrease monotonically \cite{Borisov10}. Analogous results have been obtained in three dimensions for Dirichlet waveguides with circular Neumann windows \cite{Najar1}, where the dependence of the discrete spectrum on the window radius was established.

More complex geometries lead to even richer spectral behavior. In \cite{najarolend}, a straight three-dimensional waveguide with two concentric Neumann windows located on opposite walls was considered. Compared to the single-window case, the additional geometric parameter allows for greater control over spectral and localization properties. In particular, the spectrum exhibits pronounced anticrossing phenomena as the inner window radius varies, accompanied by sharp changes in the spatial localization of eigenfunctions.

\subsection{The result}
We investigate the spectral properties of a self-adjoint elliptic operator in a spatial waveguide coupled through an elliptic window. This work is a natural continuation of the analysis performed in \cite{Najar1, najarolend}, where the coupling was realized via a circular aperture. Replacing the circular window by an elliptic one leads to a more realistic geometric model and introduces a class of anisotropic perturbations that are both experimentally relevant and mathematically nontrivial.

From a physical viewpoint, elliptic windows provide a more faithful representation of coupling devices between waveguides than circular or rectangular apertures. In practical applications, apertures produced by machining, engraving, or drilling are typically elongated and often closely resemble ellipses. The elliptic geometry thus enables controlled anisotropy of the coupling: by adjusting the ratio of the semi-axes, specific propagation directions or transverse modes may be favored, directly influencing the formation of bound states, resonances, and energy transfer between waveguides.

From a mathematical perspective, the elliptic window introduces an additional geometric parameter, namely the eccentricity, giving rise to a continuous family of domain perturbations. This setting allows for a systematic investigation of the influence of the window geometry on the spectral properties of the operator, including the appearance, disappearance, and bifurcation of discrete eigenvalues below the essential spectrum threshold, their asymptotic dependence on the semi-axes, and the breaking of symmetries present in the circular case.

The loss of rotational invariance inherent to the elliptic geometry modifies the structure of transverse mode coupling, lifts degeneracies associated with higher symmetry. Varying the aspect ratio provides a means to track the creation and splitting of eigenvalues emerging from the bottom of the essential spectrum, thereby revealing spectral mechanisms that do not occur in isotropic configurations.

From a spectral point of view, the geometry of the window plays a decisive role. In the circular case, the high degree of symmetry allows separation of variables and yields an explicit spectral characterization in terms of Bessel functions, together with a precise description of eigenvalue multiplicities and threshold resonances. By contrast, the elliptic case is more subtle. Although ellipses are spectrally determined and rigid in certain restricted classes of planar domains, the general inverse spectral problem remains open.  Existing results are mainly perturbative and concern ellipses of small eccentricity \cite{Hez}. The reduced symmetry typically leads to simple eigenvalues and the loss of separability in Cartesian coordinates; although separation is possible in elliptic coordinates, it results in Mathieu-type operators and yields only implicit spectral information.

In this work, we show that the introduction of an elliptic aperture still produces a finite family of discrete eigenvalues below the bottom of the essential spectrum. This theoretical result is complemented by numerical computations illustrating the dependence of these eigenvalues on the geometric parameters of the ellipse, thereby highlighting the quantitative effects of anisotropy and symmetry breaking on the low-lying spectral structure of the coupled waveguide system. Numerical results illustrating the variation of the fundamental state with respect to several geometric parameters are presented.

The paper is organized as follows. In Section~2, we introduce the model and recall relevant known results. Section~3 is devoted to the main results and their discussion, while Section~4 presents numerical simulation.

	\section{Model and Formulation}
	\label{sec_2}
	The system we are going to study is given in figure \ref{Fig1}. We consider a Schr\"odinger particle with mass $m_p$ whose motion is confined to a pair of parallel planes separated by the width $d$. For simplicity, we assume that they are placed at $z=0$ and $z=d$. We shall denote this configuration space by $\Omega$
	$$
	\Omega=\mathbb{R}^2\times [0,d].
	$$
	
	\begin{figure}
		\centering
		\includegraphics[width=0.95\columnwidth]{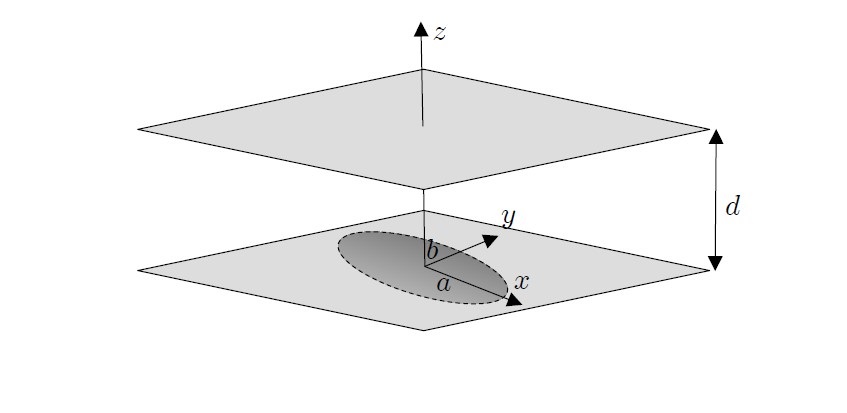}
		\caption{\label{Fig1}
			Dirichlet wave guide with elliptic Neumann window of radii $a$ and $b$.}
	\end{figure}

	Let $\gamma(a,b)$ be an ellipse of radius $a,b$ with its center at $(0,0,0)$, where
	\begin{equation}
		\gamma(a,b)=\{(x,y,0)\in \mathbb{R}^3;\ \frac{x^2}{a^2}+\frac{y^2}{b^2}=1\}.
	\end{equation}

	Without loss of generality we assume that $0< b\leq a$. We set 
	$\Gamma=\partial \Omega \backslash (\gamma(a,b))
	$. 
	We consider Dirichlet boundary condition on 
	$\Gamma$ and Neumann boundary condition in $\gamma(a,b)$  what means that $\partial\Omega_D$ and $\partial\Omega_N$ from (\ref{Dirichlet1}) and (\ref{Neumann1}) take the following form: $\partial\Omega_D\equiv\Gamma$, $\partial\Omega_N\equiv\gamma(a,b).$
	\subsection{The Hamiltonian}\label{Hamiltonian1}
	Let us define the self-adjoint operator on $L^{2}(\Omega )$ corresponding to the particle Hamiltonian $\hat{H}$. This is done by means of the quadratic forms. Namely, let $q_0$ be the quadratic form
	\begin{equation}
		 q_{0}(f,g)=\int_\Omega\overline{\nabla f}\cdot\nabla gd{\bf r},\ \mathrm{%
			with\ domain}\ \mathcal{Q}(q_{0})=\{f\in H^{1}(\Omega );\ f\lceil\partial\Omega_D=0\},
	\end{equation}
	where $H^{1}(\Omega )=\{f\in L^{2}{(\Omega )}|\nabla f\in L^{2}(\Omega )\}$ is the standard Sobolev space and we denote by $f\lceil\partial\Omega_D$ the trace of the function $f$ on $\partial\Omega_D$. It follows that $q_{0}$ is a densely defined, symmetric, positive and closed quadratic form. We denote the unique self-adjoint operator associated with $q_{0}$ by $\hat{H}$ and its domain by $D(\Omega )$. It is the Hamiltonian describing our system. From \cite{Resi} (page 276), we infer that the domain $D(\Omega)$ of $\hat{H}$ is
	$$
	D(\Omega )=\Big\{f\in H^{1}(\Omega );\ -\Delta f\in L^{2}(\Omega
	),f\lceil\partial\Omega_D=0,\frac{\partial f}{\partial z}\lceil\partial\Omega_N=0\Big\}
	$$
	and
\begin{equation*}
	\hat{H}f=-\Delta f,\ \ \forall f\in D(\Omega ),
\end{equation*}
	where we have set $\hbar^2/(2m_p)\equiv 1$.

	\subsection{Some known facts}\label{SomeKnownFacts1}
	
	Let us start this subsection by recalling that in the particular case when $a=b=0$, we get $\hat{H}^{0}$, the Dirichlet Laplacian, and at $a=b=+\infty $ we get $\hat{H}^{\infty }$, the Neumann Laplacian. Since
	$$
	\hat{H}=\left(-\Delta_{\mathbb{R}^2}\right)\otimes I\oplus I\otimes\left(-\Delta_{\lbrack 0,d]}\right),
	\mathrm{on}\ L^2(\mathbb{R}^{2})\otimes L^2([0,d]),
	$$

	(see \cite{Resi}) we get that the spectrum of $\hat{H}^0$ is $\displaystyle \left[\left(\frac{\pi}{d}\right)^{2},+\infty\right[$ and the spectrum of $\hat{H}^\infty$ is $\displaystyle [(\frac{\pi}{2d})^2,+\infty[$. Consequently, we have

	$$
	\left[\left(\frac{\pi}{d}\right)^2,+\infty\right[\subset\sigma
	\left(\hat{H}\right)\subset\left[ (\frac{\pi}{2d})^2,+\infty\right[ .
	$$
	
	Using the property that the essential spectra are preserved under compact perturbation, we deduce that the essential spectrum of $\hat{H}$ for any finite $a$ and $b$ is
	
	$$
	\sigma _{ess}\left(\hat{H}\right)=\left[(\frac{\pi}{d})^2,+\infty\right[.
	$$
	
	An immediate consequence is  that the discrete spectrum, if it exists, lies in $\displaystyle \left[(\frac{\pi}{2d})^2,(\frac{\pi}{d})^2\right]$.
\subsection{Preliminary: Elliptic coordinates system}
\label{Cylindrical1}

Due to the elliptic symmetry of the system, it is convenient to introduce an elliptic coordinate system \((r,\theta,z)\). The Cartesian coordinates \((x,y)\) are related to the elliptic coordinates \((r,\theta)\) by
\begin{equation}
	x = c \cosh r \cos \theta ,
	\label{eqn1}
\end{equation}
\begin{equation}
	y = c \sinh r \sin \theta ,
	\label{eqn2}
\end{equation}
where \(c>0\) is a geometric constant. The coordinates satisfy \(r \ge 0\) and \(0 \le \theta < 2\pi\).

Curves of constant \(r=r_0\) are obtained from
\begin{equation}
	x = c \cosh r_0 \cos \theta ,
	\label{eqn3}
\end{equation}
\begin{equation}
	y = c \sinh r_0 \sin \theta .
	\label{eqn4}
\end{equation}
Using the identity \(\cos^2\theta+\sin^2\theta=1\), these relations lead to
\begin{equation}
	\frac{x^2}{c^2 \cosh^2 r_0} + \frac{y^2}{c^2 \sinh^2 r_0} = 1 .
	\label{eqn5}
\end{equation}
Equation~\eqref{eqn5} describes an ellipse with semi-major axis
\(a=c\cosh r_0\) and semi-minor axis \(b=c\sinh r_0\), which satisfy
\begin{equation}
	a^2 - b^2 = c^2 .
	\label{eqn6}
\end{equation}
Thus, \(2c\) is the distance between the foci of the ellipse. The eccentricity is given by
\begin{equation}
	e = \frac{c}{a} = \frac{1}{\cosh r_0}
	= \sqrt{1 - \left(\frac{b}{a}\right)^2}.
	\label{eqn7}
\end{equation}
Hence, surfaces of constant \(r\) correspond to confocal ellipses, while surfaces of constant \(\theta\) are confocal hyperbolas.

Three-dimensional systems with elliptic cross-sections are conveniently described using elliptic cylindrical coordinates,
\begin{equation}
	\left\{
	\begin{array}{lll}
		x & = c\cosh r \cos \theta, \\
		y & = c\sinh r \sin \theta, \\
		z & = z .
	\end{array}
	\right.
	\label{eqn12}
\end{equation}
This coordinate system is orthogonal. The associated Laplacian operator reads
\begin{equation}
	\nabla^2 =
	\frac{2}{c^2(\cosh 2r - \cos 2\theta )}
	\left[
	\frac{\partial^2}{\partial \theta^2}
	+ \frac{\partial^2}{\partial r^2}
	\right]
	+ \frac{\partial^2}{\partial z^2}.
	\label{eqn14}
\end{equation}

\subsection{The Eigenvalue problem}\label{sec2.4}
The wavefunction is obtained by solving the time-independent Schr\"{o}dinger
equation given by
\begin{equation}
	-\frac{\hslash^2}{2m_p}\nabla^2 \Psi + V\Psi = E\Psi
	\label{eqn16a}
\end{equation}
where $V$ is the potential of the system. It is given by
\begin{equation}
	V(x,y,z) = \left\{
	\begin{array}{lc}
		0, &(x,y,z)\in \Omega \\
		\infty, & \text{otherwise}.
	\end{array}
	\right.
	\label{eqn17}
\end{equation}
The wave function inside the waveguide satisfies the time-independent Schr\"odinger equation
\begin{equation}
	-\frac{\hslash^2}{2m_p}\nabla^{2}\Psi = E\Psi .
	\label{eqn16}
\end{equation}
Exploiting the elliptic symmetry of the cross-section, we express the Laplacian in elliptic cylindrical coordinates. Equations~\eqref{eqn16a} and ~\eqref{eqn16} then takes the form
\begin{equation}
	-\frac{2}{c^{2}(\cosh 2r - \cos 2\theta)}
	\left(
	\frac{\partial^{2}\Psi}{\partial r^{2}}
	+ \frac{\partial^{2}\Psi}{\partial \theta^{2}}
	\right)
	+ \frac{\partial^{2}\Psi}{\partial z^{2}}
	= E\Psi .
	\label{eqn19}
\end{equation}

We seek separable solutions of the form
\[
\Psi(r,\theta,z) = M(r)N(\theta)Z(z).
\]
Substituting into~\eqref{eqn19} and dividing by \(MNZ\) yields
\begin{equation}
	-\frac{2}{c^{2}(\cosh 2r - \cos 2\theta)}
	\left(
	\frac{1}{M}\frac{d^{2}M}{dr^{2}}
	+ \frac{1}{N}\frac{d^{2}N}{d\theta^{2}}
	\right)
	- \frac{1}{Z}\frac{d^{2}Z}{dz^{2}}
	= E .
	\label{eqn21}
\end{equation}

The longitudinal dependence is separated by introducing the constant \(E_z\),
\begin{equation}
	-\frac{1}{Z}\frac{d^{2}Z}{dz^{2}} = E_z .
	\label{eqn22}
\end{equation}
The general solution is
\begin{equation}
	Z(z) = A\cos(\pi\sqrt{E_z}\,z) + B\sin(\pi\sqrt{E_z}\,z),
	\label{eqn23}
\end{equation}
where the constants are fixed by the boundary conditions in the \(z\)-direction.

The boundary of the elliptic window corresponds to \(r=r_0\), with
\[
r_0 = \arctanh\left(\frac{b}{a}\right),
\]
which in Cartesian coordinates is described by
\begin{equation}
	\frac{x^{2}}{a^{2}} + \frac{y^{2}}{b^{2}} = 1 .
	\label{eqn24}
\end{equation}

Substituting \(E_z\) into~\eqref{eqn21} leads to the angular and radial equations
\begin{equation}
	\frac{d^{2}N}{d\theta^{2}} + (\lambda - 2q\cos 2\theta)N = 0,
	\label{eqn27}
\end{equation}
\begin{equation}
	\frac{d^{2}M}{dr^{2}} - (\lambda - 2q\cosh 2r)M = 0,
	\label{eqn28}
\end{equation}
with
\begin{equation}
	q = \frac{c^{2}(E - E_z)}{4}.
	\label{eqn29}
\end{equation}

Equations~\eqref{eqn27} and~\eqref{eqn28} are the angular and radial Mathieu equations. Periodic angular solutions exist only for discrete characteristic values \(\lambda\), leading to Mathieu functions \(\ce_m(\theta,q)\) and \(\se_{m+1}(\theta,q)\). The associated radial solutions are given by modified Mathieu functions. Matching the solutions at the boundary \(r=r_0\) yields to the eigenvalues $E$.
Further method of computation required by the specific waveguide configuration are discussed in Section~\ref{sec4}.

	\section{Analytical results}\label{sec_3}
	Here we prove existence conditions and provide evaluations derived from the analytical consideration.
	\begin{theorem}\label{th1}
		The operator $\hat{H}$ has at least one isolated eigenvalue in $\displaystyle \left[(\frac{\pi}{2d})^2,(\frac{\pi}{d})^2\right] $ for any nonzero $a$ and $b$.
		
		Moreover, for $b$ big enough and $E(a,b)$
		being an eigenvalue of $\hat{H}$ less then $\displaystyle (\frac{\pi}{d})^2$,
		there exist positive constants $C_a$ and $C_b$ such that
		
		\begin{equation}\label{as1}
			E(a,b)\in\left(\frac{C_a}{a^2},\frac{C_b}{b^2}\right).
		\end{equation}
	\end{theorem}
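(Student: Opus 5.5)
The plan has two independent parts: a variational (Birman--Schwinger-free) trial-function argument for existence, and a monotonicity/bracketing argument that reduces the two-sided estimate \eqref{as1} to the circular window. Throughout, set $E_1=(\pi/d)^2$ and $\chi_1(z)=\sqrt{2/d}\,\sin(\pi z/d)$, the normalized ground state of the transverse Dirichlet problem on $[0,d]$.

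\emph{Existence.} By Subsection~\ref{SomeKnownFacts1}, $\sigma_{ess}(\hat H)=[E_1,+\infty[$ and $\sigma(\hat H)\subset[(\pi/2d)^2,+\infty[$. By the min--max principle, it therefore suffices to exhibit $\psi\in\mathcal Q(q_0)$ with
\[
q_0(\psi,\psi)-E_1\|\psi\|^2<0 .
\]
\begin{enumerate}
\item Following the Exner--\v{S}eba--Tater window argument, take
\[
\psi_{\varepsilon,\eta}=\varphi_\varepsilon(x,y)\chi_1(z)+\eta\, j(x,y,z).
\]
Here $\varphi_\varepsilon$ is a logarithmic cut-off in $\mathbb R^2$: $\varphi_\varepsilon=1$ on a large disc, and $\varphi_\varepsilon(\rho)=\max\{0,\,1-\varepsilon\ln^+(\rho/R)\}$ beyond it. Since the longitudinal space is two-dimensional, exponential or linear cut-offs do not make the gradient term vanish, but the logarithmic one gives $\|\nabla\varphi_\varepsilon\|^2_{L^2(\mathbb R^2)}=O(\varepsilon)$. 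The function $j\in C_0^\infty$ is supported in a small neighbourhood of an interior point of the window, with $j(x,y,0)\ge0$ and not identically zero on $\gamma(a,b)$.
\item Because $-\chi_1''=E_1\chi_1$, the unperturbed part contributes only $\|\nabla\varphi_\varepsilon\|^2=O(\varepsilon)$.
\item For the cross term, integrate by parts in $z$. All terms cancel except the boundary contribution on the Neumann window, where $\chi_1$ does not vanish in derivative:
\[
q_0(\varphi_\varepsilon\chi_1,j)-E_1(\varphi_\varepsilon\chi_1,j)=-\chi_1'(0)\int_{\gamma(a,b)} j(x,y,0)\,dx\,dy + O(\varepsilon) .
\]
Here $\chi_1'(0)=\sqrt{2/d}\,\pi/d>0$, and $\varphi_\varepsilon\equiv1$ on $\operatorname{supp} j$.
\item The $\eta^2$-term is bounded by $C\eta^2$. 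Hence the quadratic expression equals $O(\varepsilon)-2\eta K+C\eta^2$ with $K>0$. Choosing $\eta$ small and then $\varepsilon$ small makes it negative.
\end{enumerate}
This works for any window with nonempty interior, that is, for all $a,b>0$.

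\emph{Two-sided estimate.} Write $\gamma(a,b)$ for the elliptic window. Since $0<b\le a$, the disc of radius $b$ is contained in $\gamma(a,b)$, which in turn is contained in the disc of radius $a$. Enlarging the Neumann part of the boundary enlarges the form domain $\mathcal Q(q_0)$ while leaving the form expression unchanged. Min--max therefore gives that each eigenvalue $E_n$ is monotonically non-increasing in the window, hence
\[
E_n(a,a)\le E_n(a,b)\le E_n(b,b),
\]
where $E_n(\rho,\rho)$ are the eigenvalues for the circular window of radius $\rho$ studied in \cite{Najar1}. Note that \eqref{as1} as literally stated cannot hold, because all eigenvalues lie in $[(\pi/2d)^2,(\pi/d)^2]$. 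I therefore read it as the estimate on the gap above the Neumann threshold, $E(a,b)-(\pi/2d)^2\in(C_a/a^2,C_b/b^2)$. For radius $\rho$ large, the circular results then supply bounds of the form $(\pi/2d)^2+C/\rho^2$, and these transfer directly to the elliptic case.
\begin{itemize}
\item \emph{Upper bound.} If it must be proved directly, use the trial function $\chi_N(z)\,u(x,y)$, where $\chi_N(z)=\sqrt{2/d}\cos(\pi z/2d)$ is the lowest transverse Neumann--Dirichlet mode and $u$ is the first Dirichlet eigenfunction of the disc of radius $b$ (extended by zero). This gives $E\le(\pi/2d)^2+j_{0,1}^2/b^2$.
\item \emph{Lower bound.} Use Neumann bracketing: impose an extra Neumann condition on the cylinder of radius $a$ (over the window) and compare with the inner separable problem.
\end{itemize}

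The main obstacle is the lower bound. Plain Neumann bracketing on the lateral surface of the cylinder only yields the threshold $(\pi/2d)^2$, with no $1/a^2$ gap. To obtain the gap I would try the following: inside the cylinder, decompose $\psi$ into the transverse Neumann--Dirichlet modes; control the first mode's radial behaviour by matching with the exterior region, where the energy is at least $E_1$; and then use a Hardy/Poincar\'e-type inequality on the disc with a Robin-type coupling at $\rho=a$, which produces a constant times $a^{-2}$. If this proves too delicate, I would instead quote the corresponding circular-window estimate of \cite{Najar1} and conclude by the monotonicity argument above.
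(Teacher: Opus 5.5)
Your existence argument is correct, but it takes a different route from the paper. You build the Exner--\v{S}eba--Tater trial function directly, with the logarithmic cut-off that a two-dimensional longitudinal space requires. This is self-contained and works for any window with nonempty interior. The paper instead cites the disc-window result of \cite{Najar1} and uses that enlarging the Neumann part only lowers the spectrum. Its argument is shorter but not independent, and it also places the ground state below the disc one.

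You are right that \eqref{as1} cannot hold literally. The paper reaches that form only because it writes the transverse energies of $\hat H^{-,D}_{\xi,\xi}$ as $(l\pi)^2$ and keeps $l=0$, which discards the term $(\pi/2d)^2$. Your trial function $\chi_N u$ is exactly the ground state of $\hat H^{-,D}_{b,b}$. So your upper bound is the paper's Dirichlet-bracketing bound with that term restored; for the $n$-th eigenvalue, use the span of $\chi_N u_k$, $k\le n$.

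The gap is the lower bound $E-(\pi/2d)^2\ge C_a/a^2$, which you do not prove. The Hardy/Robin idea is not carried out. Quoting the disc case does not help either, because in the paper the lower bound comes from the interlacing $E_{j-1}(\hat H^{-,D}_{a,a})\le E_j(\hat H)$. That inequality does not follow from the bracketing and says nothing for $j=1$. Genuine Neumann bracketing at the window edge gives only $(\pi/2d)^2$, as you observed.

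The missing idea is a fibrewise reduction followed by a Neumann cut strictly outside the window. For a.e. $(x,y)$, $\psi(x,y,\cdot)$ vanishes at $z=d$, and also at $z=0$ off the window $W\subset D_a$. Hence
\[
\int_0^d|\partial_z\psi|^2\,dz\ \ge\ V(x,y)\int_0^d|\psi|^2\,dz,\qquad V\ge\Big(\frac{\pi}{2d}\Big)^2+V_0\,\mathbf{1}_{\mathbb R^2\setminus D_a},\quad V_0=\frac{3\pi^2}{4d^2},
\]
and therefore $\hat H\ge h\otimes I$ with $h=-\Delta_{\mathbb R^2}+V$. Now bracket $h$ with a Neumann condition on $|x|=2a$:
\begin{itemize}
\item Outside the circle, $h\ge(\pi/d)^2$.
\item Inside, the scaling $x=2ay$ shows that the bottom is at least $(\pi/2d)^2+\mu(4a^2V_0)/(4a^2)$. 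Here $\mu(t)$ is the ground-state energy of the Neumann Laplacian on the unit disc plus $t\,\mathbf{1}_{\{1/2<|y|<1\}}$; it is nondecreasing in $t$ and strictly positive for $t>0$.
\end{itemize}
Thus every eigenvalue below $(\pi/d)^2$ satisfies $E\ge(\pi/2d)^2+\mu(4a_0^2V_0)/(4a^2)$ for $a\ge a_0$. Since $a\ge b$, this holds once $b$ is large, and it gives the missing half of the corrected \eqref{as1}.
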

	\begin{remark}
		The first claim of Theorem \ref{th1} is valid for more general shape of bounded surface $\mathcal{S}$ with Neumann boundary condition, not necessarily a disc; it suffices that the surface contains a disc of radius $a>0$.
	\end{remark}
	\textbf{Proof.} For  the proof of the the first claim one may mimic the argument given in \cite{Najar1} and adjusted for the case of the two windows; however, much simpler and elegant way is to use the fact that the Neumann window is a negative perturbation \cite{Exner6}. Thus, if the one Neumann window creates the bound state \cite{Najar1}, the insertion of the second one just pushes it lower. $\blacksquare$
	
	The proof of the second claim is based on the bracketing argument. Let us split $L^2(\Omega)$ as follows: $L^2(\Omega)=L^2(\Omega_{a,b}^{-})\oplus L^{2}(\Omega_{a,b}^{+})$, with
	\begin{eqnarray*}
		\Omega_{a,b}^{-}&=&\left\{(r,\theta ,z)\in \left[
		0,\arctanh(\frac{b}{a})\right]\times\lbrack 0,2\pi
		\lbrack\times\lbrack 0,d]\right\},\\
		\Omega_{a,b}^{+}&=&\Omega\backslash\Omega_{a,b}^{-}.
	\end{eqnarray*}
	Therefore,
	$$
	\hat{H}_{a,b}^{-,N}\oplus\hat{H}_{a,b}^{+,N}\leq\hat{H}\leq\hat{H}_{a,b}^{-,D}\oplus\hat{H}_{a,b}^{+,D}.
	$$
	Here, we index by $D$ and $N$ depending on the boundary conditions considered on the surface $\displaystyle r=\tanh ^{-1}(\frac{b}{a})$. The min-max principle leads to
	$$
	\sigma _{ess}\left(\hat{H}\right)=\sigma _{ess}\left(\hat{H}_{a,b}^{+,N}\right)=\sigma_{ess}\left(\hat{H}_{a,b}^{+,D}\right)=\left[(\frac{\pi}{d})^2,+\infty \right[.
	$$
	Hence, if $\hat{H}_{a,b}^{-,D}$ exhibits a discrete spectrum below $\displaystyle 1$, then $\hat{H}$ does as well. We mention that this is not a necessary condition. If we denote by $E_{j}\left(\hat{H}_{a,b}^{-,D}\right),E_{j}\left(\hat{H}_{a,b}^{-,N}\right)$ and $E_{j}\left(\hat{H}\right)$ the $j$-th eigenvalue of $\hat{H}_{a,b}^{-,D}$, $\hat{H}_{a,b}^{-,N}$ and $\hat{H}$, respectively, then, again, the minimax principle yields the following
	\begin{equation}
		E_{j}\left(\hat{H}_{a,b}^{-,N}\right)\leq E_{j}\left(\hat{H}\right)\leq E_{j}\left(\hat{H}_{a,b}^{-,D}\right)\label{es1}
	\end{equation}
	and for $j\geq 2$
	\begin{equation}\label{Sequence1}
		E_{j-1}\left(\hat{H}_{a,b}^{-,D}\right)\leq E_{j}\left(\hat{H}\right)\leq E_{j}\left(\hat{H}_{a,b}^{-,D}\right).
	\end{equation}
	As the computation of the eigenvalue of a frustum is not an easy task, let us remark that
	\begin{equation}\label{Sequence2}
		E_j\left(\hat{H}_{a,a}^{-,D}\right)\leq E_j\left(\hat{H}_{a,b}^{-,D}\right)\leq E_j\left(\hat{H}_{b,b}^{-,D}\right)
	\end{equation}
	(the same chain of inequalities is true for the corresponding Neumann Hamiltonians $\hat{H}^{-,N}$, too).
	Then, from equations (\ref{Sequence1}) and (\ref{Sequence2}) it follows:
	\begin{equation}\label{Sequence3}
		E_{j-1}\left(\hat{H}_{a,a}^{-,D}\right)\leq E_{j}\left(\hat{H}\right)\leq E_{j}\left(\hat{H}_{b,b}^{-,D}\right).
	\end{equation}

	The Hamiltonian $\hat{H}_{\xi,\xi}^{-,D}$ has a sequence of eigenvalues \cite{abra, Lach, wat} given by
	$$
	E_{mnl}(\xi)=\left(l\pi\right) ^2+\left(\frac{x_{|m|n}}{\xi}\right)^2,
	$$
	where $x_{|m|n}$ is the $n$-th positive zero of Bessel function of the order $|m|$ (see \cite{abra,wat}) and index $j$ amalgamates the three  quantum numbers: transverse $l$, radial $n$ and azimuthal $m$; $j\equiv(l,m,n)$. The condition
	\begin{equation}\label{gaza}
		E_{mnl}\leq (\frac{\pi}{d})^2
	\end{equation}
	yields that $l=0$, so we get $E_{mn0}(b)=\left(x_{|m|n}/b\right)^2$ and $E_{m'n'0}(a)=\left(x_{|m'|n'}/a\right)^2$ with primed $n$ and $m$ referring to the index $j-1$ in (\ref{Sequence3}). Accordingly, for any eigenvalue $\lambda(a,b)$ of the Hamiltonian $\hat{H}$, there exist $m,n,m',n' \in \mathbb{N}$ such that
	\begin{equation}\label{as2}
		\frac{x_{|m'|n'}^2}{a^2}\leq E(a,b)\leq\frac{x_{|m|n}^2}{b^2}.
	\end{equation}
	When $b$ is big enough, we get the result. $\blacksquare$
	
	The above derivation shows that the coefficients $C_a$ and $C_b$ in equation (\ref{as1}) are, actually, the squares of the zeros of the Bessel functions.

	\section{Numerical computations}	\label{sec4}

	In this section we describe the numerical implementation used to compute the discrete
	spectrum obtained from the eigenvalue problem formulated in Section~\ref{sec2.4}.
	Only those aspects relevant to the numerical procedure are discussed, avoiding
	repetition of the analytical derivations. We analyse a dependence of the eigenenergies $E$  on the values of $a$ and $b$ in the whole range of their variation.\\

	Throughout the numerical analysis, energies are measured in units of $(\pi/d)^2$,
	so that the bottom of the essential spectrum is normalized to $E=1$.
	Lengths are expressed in units of the waveguide width $d$, and momenta in units of $1/d$. \\
	
	The configuration space is divided into two regions: region~I ($r \le r_0$),
	corresponding to the elliptic window, and region~II ($r>r_0$), corresponding to the
	straight waveguide. Since the potential is symmetric about the origin, the wavefunctions
	can be divided into even and odd classes of solution. Without loss of generality, the general 
	solution of the even wavefunction is:

	\begin{align}
		\Psi^{\text{I}}_m(r,\theta,z) &= \sum_{n=0}^\infty B_n^m \Ce_m(r,q_n^\text{I})\ce_m(\theta,q_n^\text{I})Z_n^\text{I}(z),\ r\le r_0 \label{eqn4.33} \\
		\Psi^{\text{II}}_m(r,\theta,z) &= \sum_{n=0}^\infty C_n^m \Ke_m(r,|q_n^\text{II}|)\ce_m(\theta,q_n^\text{II})Z_n^\text{II}(z),\ r \ge r_0 \label{eqn4.34}. 
	\end{align}
	
	The literature has adpoted various notations to describe the modified Mathieu functions, we follow the
	notations and formulae reported in \cite{abra}. The functions $\Ce_m$ and $\Ke_m$ are radial modified Mathieu
	functions of first and third kind, respectively. The angular Mathieu function $\ce_m$ is called cosine-elliptic 
	because its nature is similar to its trignometric counterpart. Similarily, the radial Mathieu functions behave
	analogous to Bessel functions. The transverse functions $Z_n(z)$ satisfies the 1D eigenvalue problem introduced
	in Section \ref{sec2.4}, and are given by

	\begin{align}
		Z^\text{I}_n(z) & = \sqrt{2}\cos\left(\left(n+\frac{1}{2}\right)\pi z\right),\ E_n^\text{I} = \left(n+\frac{1}{2}\right)^2 \label{eqn4.35} \\
		Z^\text{II}_n(z) & = \sqrt{2}\sin((n+1)\pi z),\ E_n^\text{II} = (n+1)^2 \label{eqn4.36} 
	\end{align}
	
	where the index $n=0,1,2,\ldots$ and $E_n$ are their corresponding energies. The parameter $q_n^\text{I/II}$ depends on the
	energies of the longitudinal wavefunction and is given by 
	
	\begin{equation}
		q_n^\text{I/II} = \frac{c^2(E-E_n^\text{I/II})}{4}
		\label{eqn4.37}
	\end{equation}

	From a physical point of view,, the introduction of a window in the waveguide perturbs the particle dynamics and lead to localization in region~I. In the absence of the window, the spectrum is determined by Dirichlet boundary conditions, and the lowest admissible energy is $E_0^{\mathrm{II}} = 1$.
	
	The emergence of a bound state for a nonzero window radius can be attributed to the coupling between transverse and angular degrees of freedom induced by the Neumann window(s). This coupling shifts the lowest transverse mode below the fundamental propagation threshold, thereby producing an eigenvalue separated from the continuous spectrum. The corresponding eigenfunction is square-integrable and decays at infinity.
	
	When the window is present, the particle energy can reach to values as low as $E_0^{\mathrm{I}} = \tfrac{1}{4}$. This leads to the following condition for which a bound state can 
	exist
	\begin{equation}
		E_0^\text{I} < E <E_0^\text{II}
		\label{eqn4.38}
	\end{equation}

	with $E$ being the total energy of the particle. Mathematically, matching the wavefunctions given in equations (\ref{eqn4.33}) and (\ref{eqn4.34})
	at the boundary of the disc $r = r_0$. For a given $m$, the following system of homogeneous linear equations is obtained

	\begin{equation}
		\sum_{n=0}^\infty F_{nn'} B_n^m = 0
		\label{eqn4.39a}
	\end{equation} 

	where the matrix elements of the infinite square matrix \textbf{F} are given by

	\begin{equation}
		F_{nn'} = \left[\frac{\Ce'_m(r_0,q_n^\text{I})}{\Ce_m(r_0,q_n^\text{I})} - \frac{\Ke'_m(r_0,|q_{n'}^\text{II}|)}{\Ke_m(r_0,|q_{n'}^\text{II}|)}\right]\cdot P_{nn'}
		\label{eqn4.39}
	\end{equation}

	here the prime above the Mathieu function is its derivative with respect to its argument. The matrix element $P_{nn'}$ describes the coupling of the transverse and angular modes of different regions
	\begin{equation}
		P_{nn'} = \int_0^1Z_n^\text{I}(z)Z_{n'}^\text{II}(z)dz \cdot \int_0^{2\pi} \ce_m(\theta,q_n^\text{I})\ce_m(\theta,q_{n'}^\text{II})d\theta.
		\label{eqn4.41}
	\end{equation}

	The nontrivial solution of equation (\ref{eqn4.39a}) exists when the matrix \textbf{F} is singular, i.e., det(\textbf{F})$=0$. The root of this equation provides the eigenvalues $E$.
	Figure (\ref{fig2}) depicts the bound state energy of the ground state wavefunction $m=0$ as a function of the semi-major and semi-minor axes $a$ and $b$. When the axes lengths are 
	small, the energy of the particle is close to upper Dirichlet threshold $E_0^\text{II} = 1$. The energy decreases for large axes lengths and approaches the lower threshold $E_0^\text{I} = 1/4$.
	When $a=b$, the trend of the energy found coincides with the one in \cite{najarolend}. Furthermore, the plot is symmetric about the line $a=b$ indicating that the orientation of the elliptic window
	does not play a role in determining the energy.

	 \begin{figure}[h!]
		\centering
		\includegraphics[width=\textwidth]{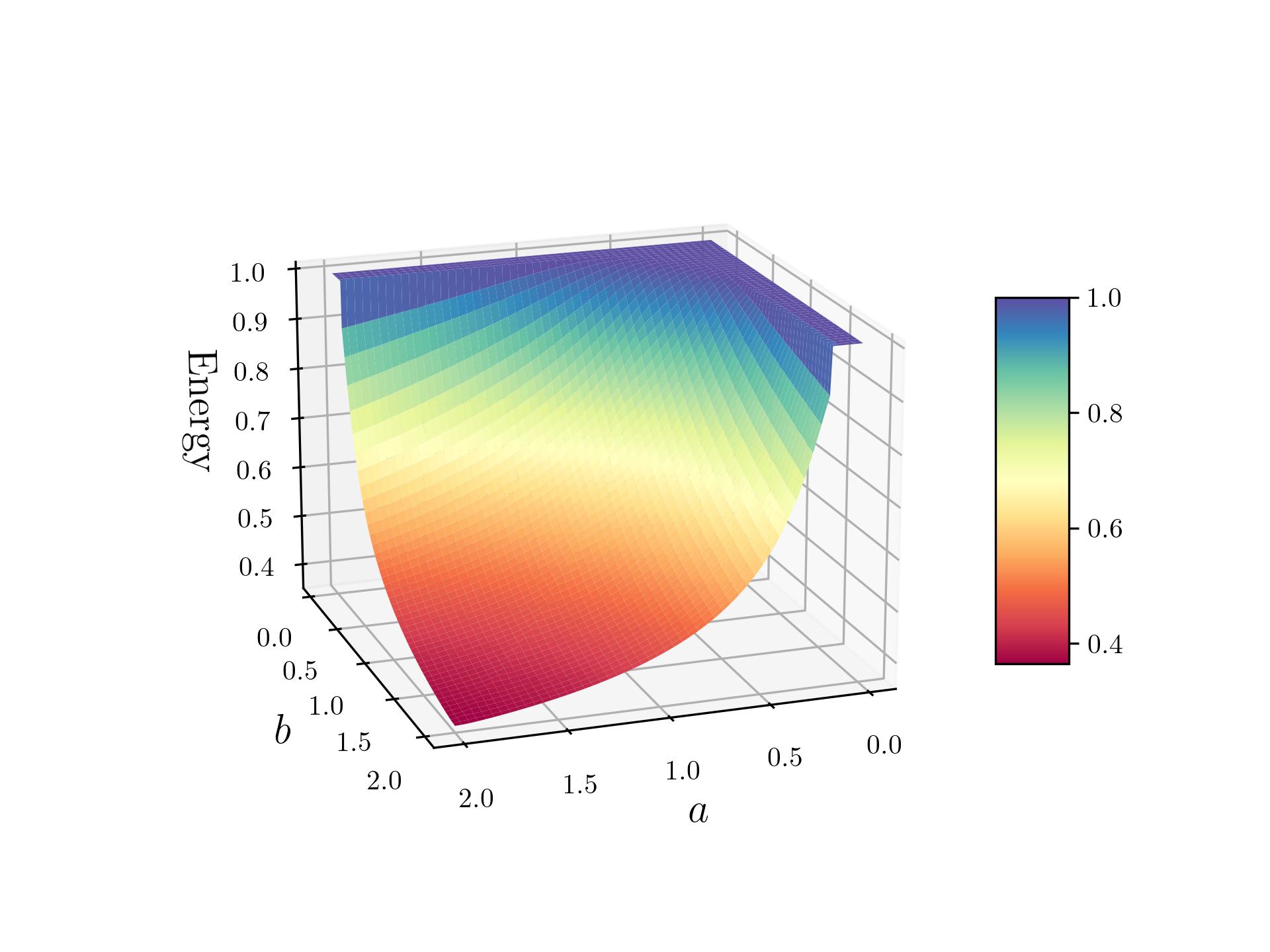}
	 	\caption{Energy spectrum of the ground state wavefunction $(m=0)$ as a function of 
	 		the elliptical axes $a$ and $b$.}
	 	\label{fig2}
	 \end{figure}
For sufficiently small radii, the energy lies very close to the fundamental
propagation threshold of the uniform Dirichlet waveguide,
$$
E^{DD}_0 = 1.
$$
It was conjectured in \cite{Exner6} that, in the case of a single window, the
energy of this bound state for small radii depends on $a$ as
$$
E_0(a) = 1 - \exp\!\left(\frac{c}{a^3}\right).
$$
In \cite{najarolend}, it was shown that the constant $c$ satisfies
$
0.44 \leq c \leq 0.45.
$
For an elliptical window that is very close to a disc, it was proved in \cite{naz} that in the
special case where $b = \varepsilon b' $  \text{and} $ a = \varepsilon a',$ the energy behaves, when $\varepsilon \to 0$, as
\begin{equation}
E_0(a,b) = 1 - \varepsilon^2 C_{a,b} + o(\varepsilon^2). \label{naz} 
\end{equation}
This quadratic behavior is confirmed by our numerical result. Indeed as shown in Figure~\ref{afixedsmall}, for $a = 0.4$ and $a = 0.5$, decreasing the eccentricity results in a clearly parabolic profile of the curve. In contrast, for $a = 1$ and $a = 1.5$, the behavior becomes hyperbolic, as illustrated in Figure~\ref{afixed}. The shape of the windows is presented in Figure~\ref{differentab}(a).\newline
As the parameter $a$ increases, the curve transitions from a concave to a convex profile. There exists a critical value $a_c$ close to $a=0.75$, which separates these two distinct behaviors and presented in Figure~\ref{acritic} .

\newpage 
	 
	 \begin{figure}[H]
		\centering
		\includegraphics[width=\textwidth]{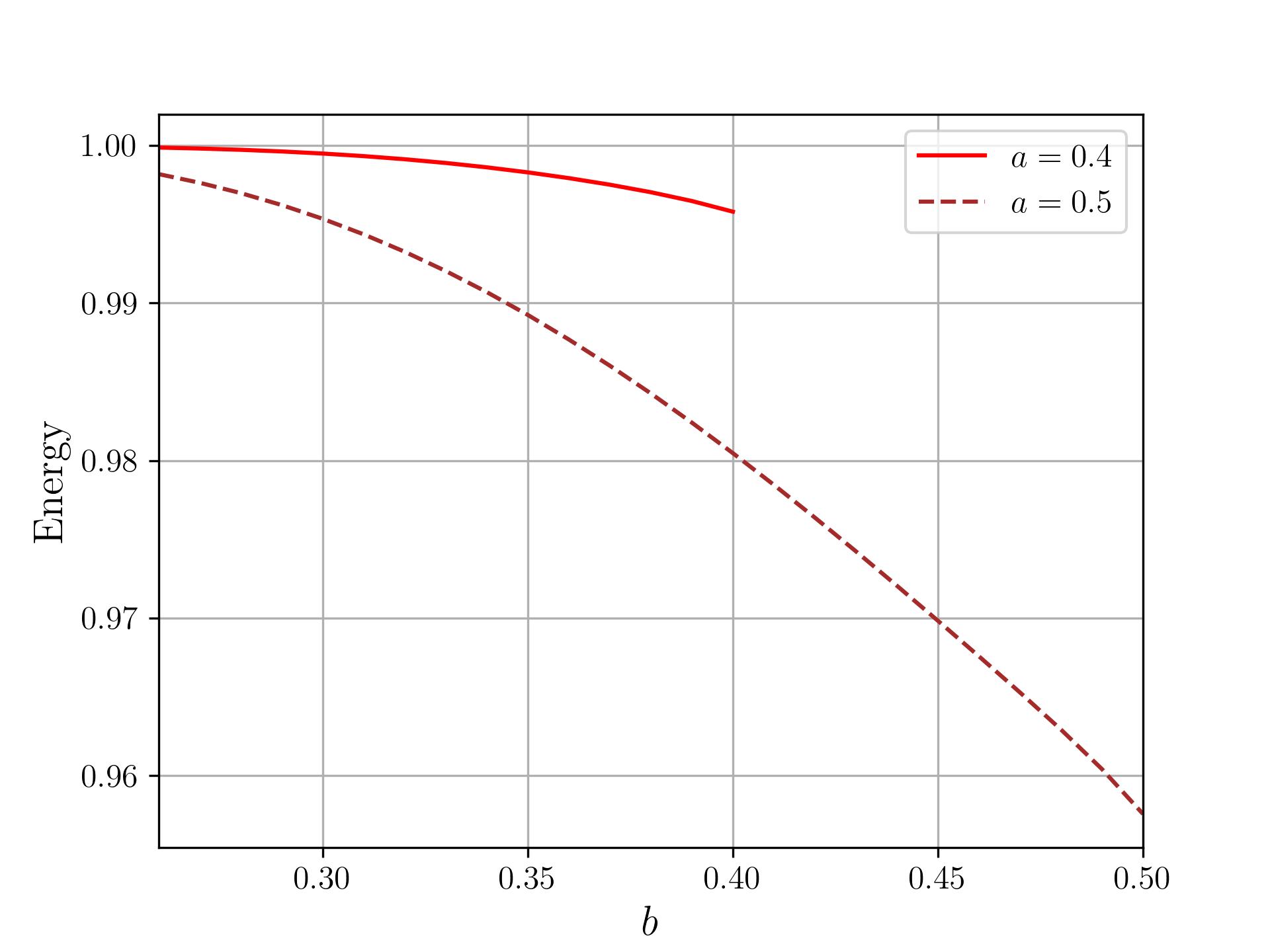}
		\caption{The energy curve when $a$ is fixed and the eccentricity is decreased}
		\label{afixedsmall}
	\end{figure}

	 \begin{figure}[H]
	 	\centering
	 	\includegraphics[width=\textwidth]{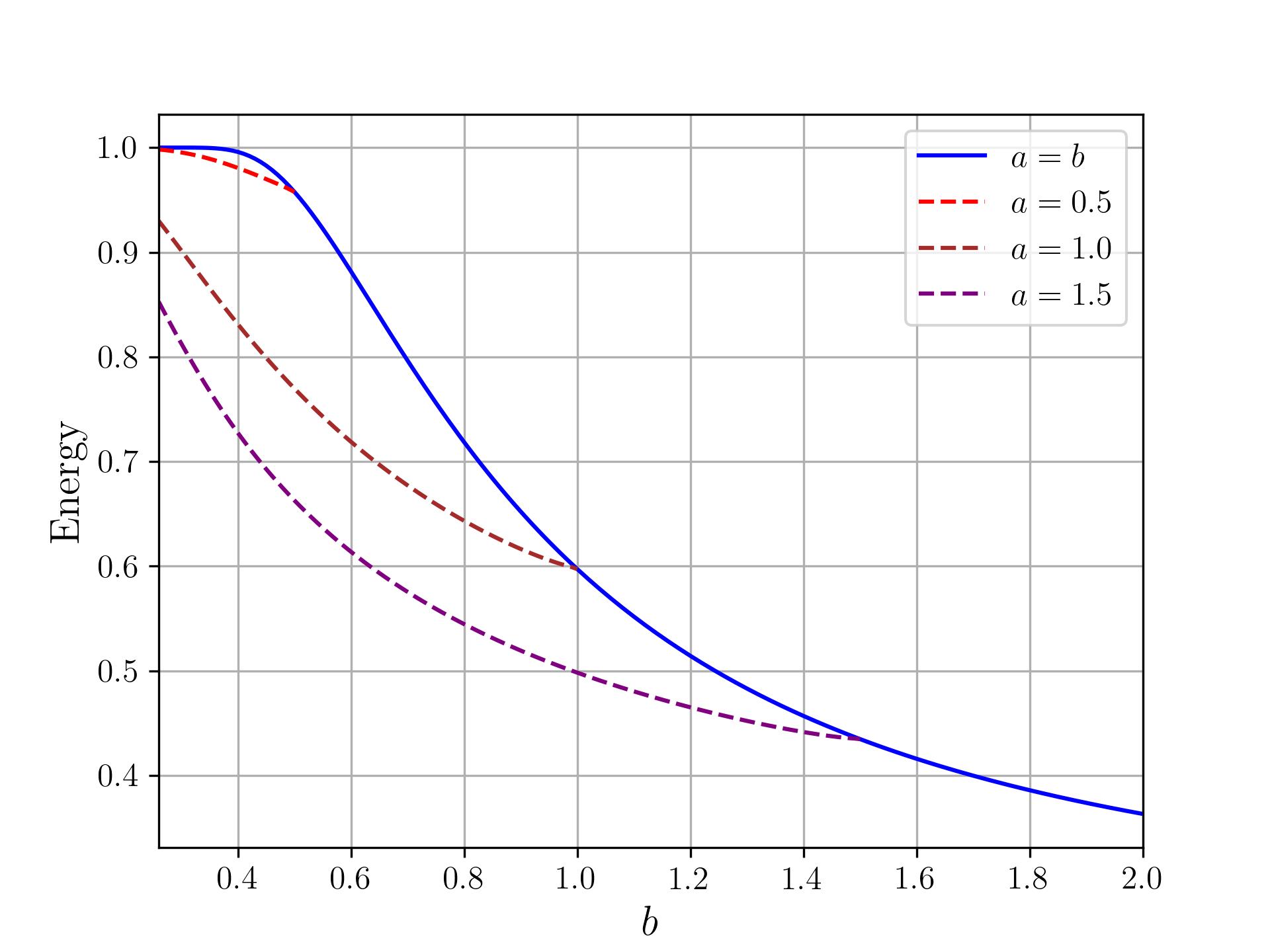}
	 	\caption{ The hyperbolic behavior of  the curves  for $a=1$ and $a=1.5$.}
	 	\label{afixed}
	 \end{figure}
	 
	 		 \begin{figure}[H]
	 	\centering
	 	\includegraphics[width=\textwidth]{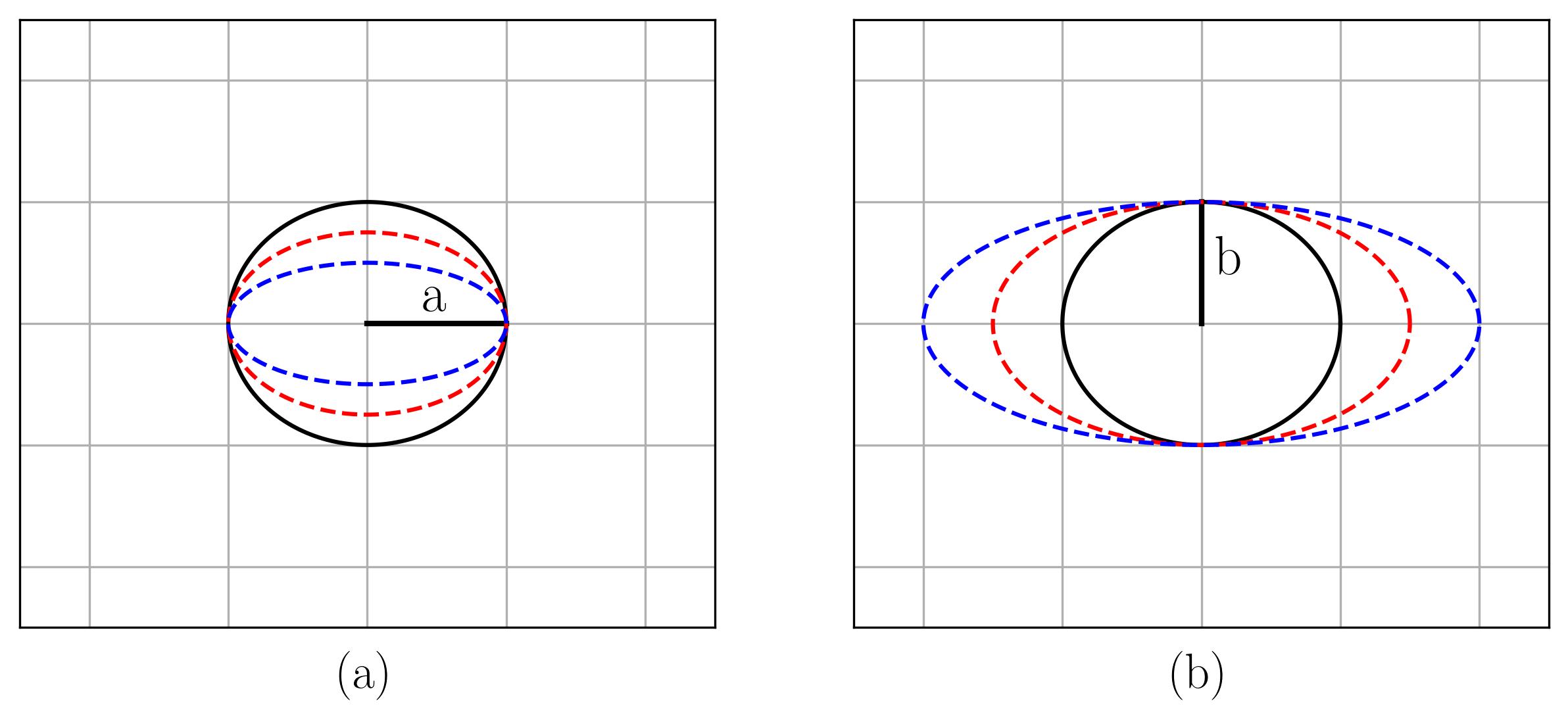}
	 	\caption{In (a), $a$ is fixed and we decrease the eccentricity (we increase $b$) of the ellipse. In (b), $b$ is fixed and we increase the eccentricity (we increase $a$) of the ellipse.}
	 	\label{differentab}
	 \end{figure}

	 \begin{figure}[H]
	 	\centering
	 	\includegraphics[width=\textwidth]{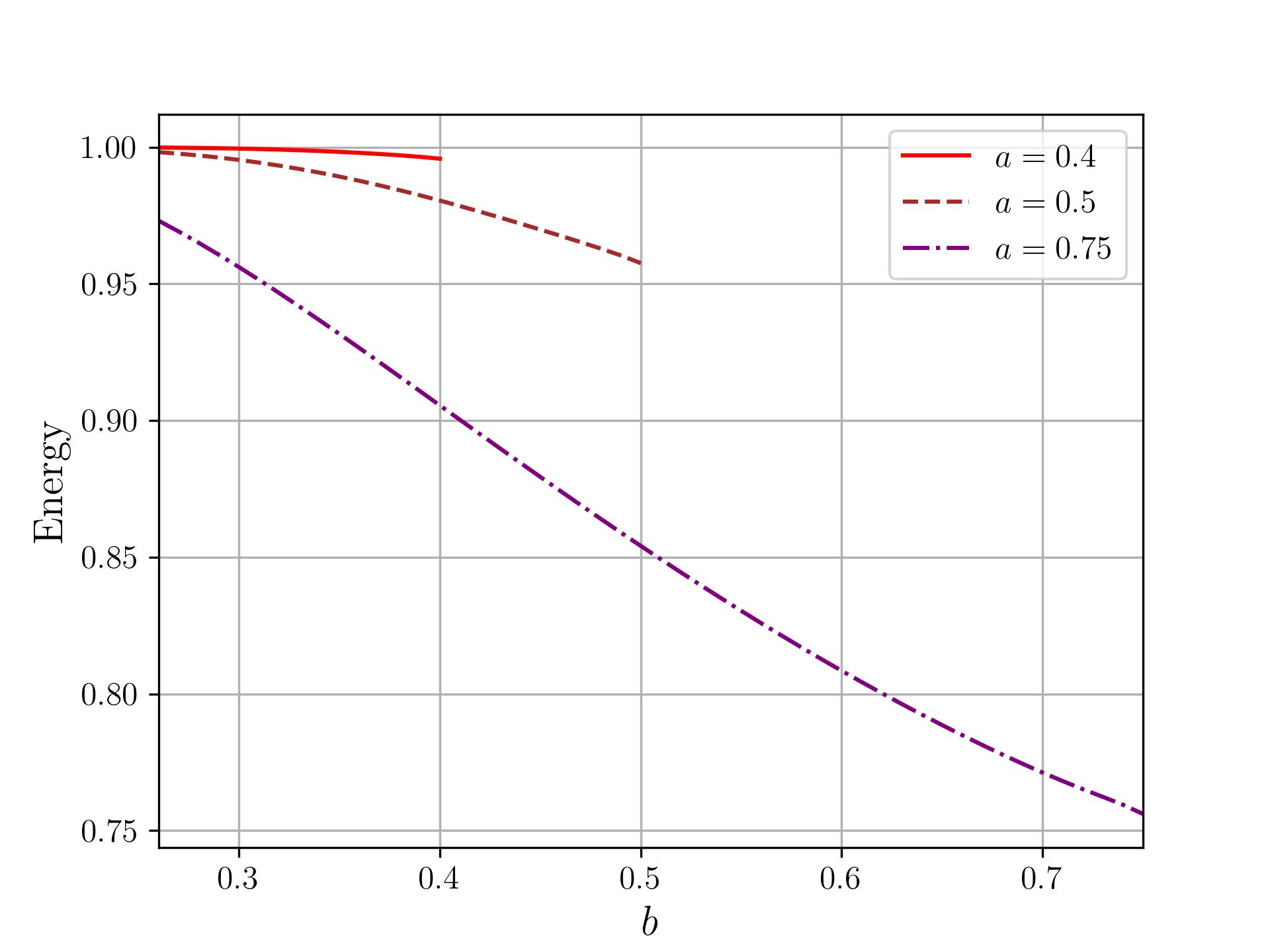}
	 	\caption{The energy curve for $a=0.75$.}
	 	\label{acritic}
	 \end{figure}
	 
When $a$ varies while $b$ is fixed with $a < b$, the corresponding energy curves are presented in Figure~\ref{bfixed}. In this regime, increasing $a$ induces an increase in the eccentricity, the shape of the window is given in Figure~\ref{differentab}(b).

In contrast with the previously analyzed configuration , no bifurcation between two distinct qualitative regimes (parabolic and hyperbolic branch ) is observed. For all considered values of $b$, the energy curves share the same qualitative behavior: an initial sharp descent followed by an asymptotically hyperbolic branch.
For $b$ small, for instance $b = 0.26$ as shown in Figure~\ref{bfixedsmall}, the energy lies very close to the fundamental
propagation threshold of the uniform Dirichlet waveguide and  the rapid decrease occurs only after a critical threshold value of $a$, approximately $ 0.5$.
From a spectral point of view, this indicates that for fixed $b$, the discrete eigenvalue emerging from the bottom of the essential spectrum decreases rapidly as $a$ increases, and subsequently converges toward a limiting value, forming a plateau. The asymptotic behavior with respect to $a$ is described by a hyperbolic branch of the form $y=C_b$
where $C_b$ denotes the asymptotic level associated with the parameter $b$. Furthermore, $C_b$ is a decreasing function of $b$, reflecting the monotonic dependence of the limiting eigenvalue on the geometric parameter. 
	 \begin{figure}[H]
	 	\centering
	 	\includegraphics[width=\textwidth]{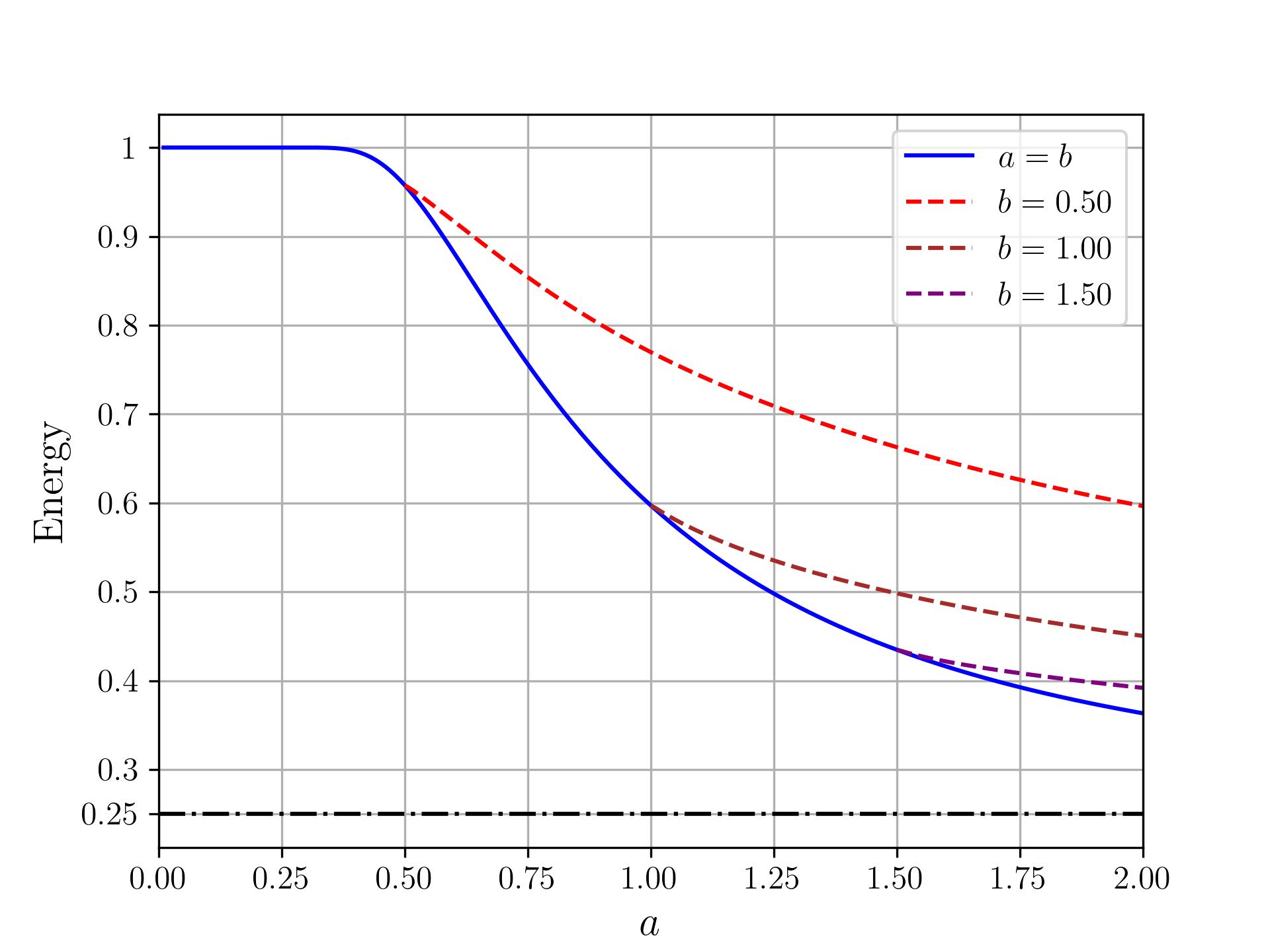}
	 	\caption{The energy curve for $b$ fixed.}
	 	\label{bfixed}
	 \end{figure}

	 \begin{figure}[H]
	 	\centering
	 	\includegraphics[width=\textwidth]{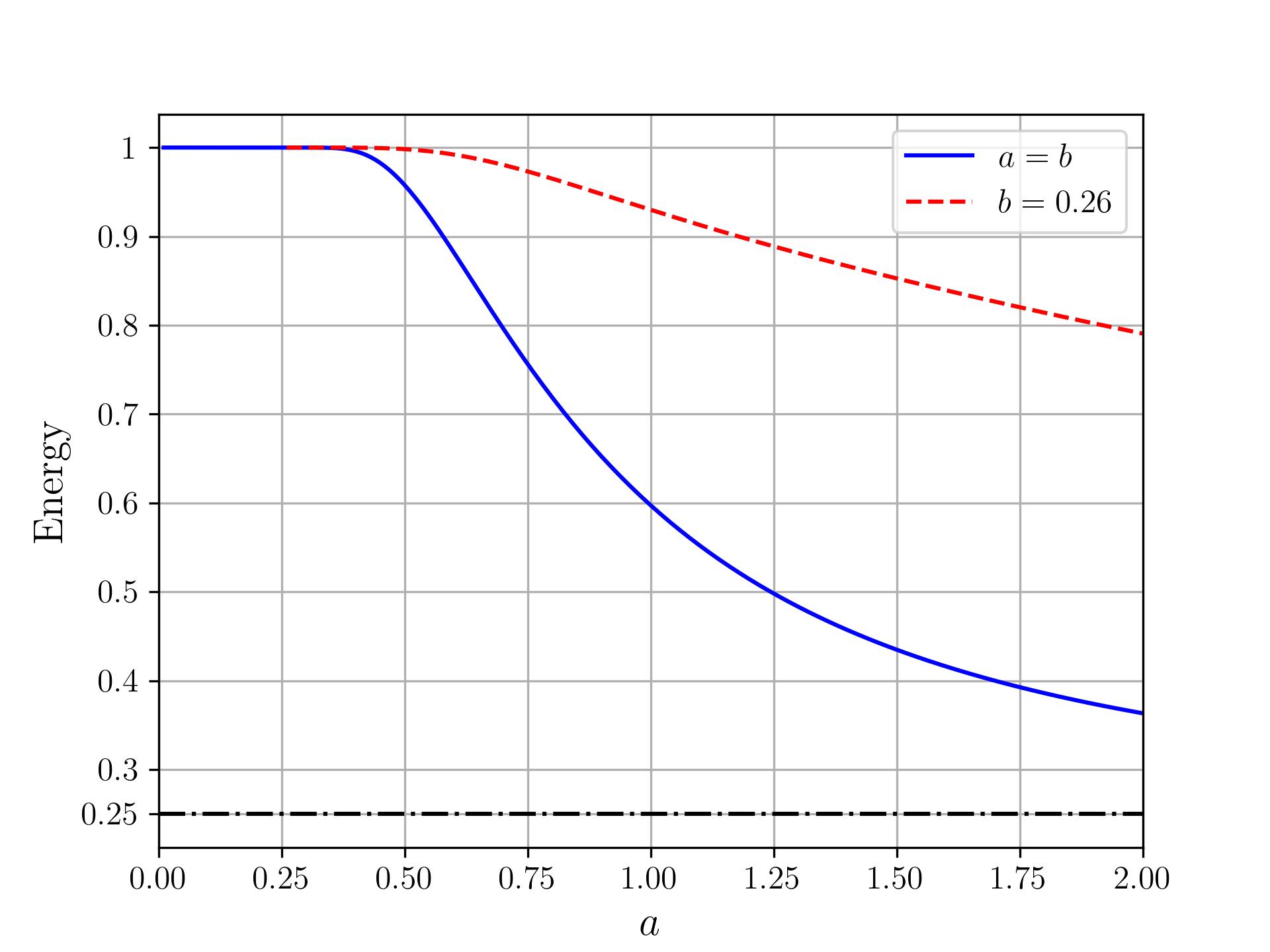}
	 	\caption{ The energy curve for $b$ small.}
	 	\label{bfixedsmall}
	 \end{figure}
	 
	 When $b$ is fixed, the energy curve as a function of the elliptical parameter $r$, is shown in Figure~\ref{bfixedr0} and the representation implies that the energy decreases as a function of the eccentricity of the window.
	 
	  \begin{figure}[H]
	 	\centering
	 	\includegraphics[width=\textwidth]{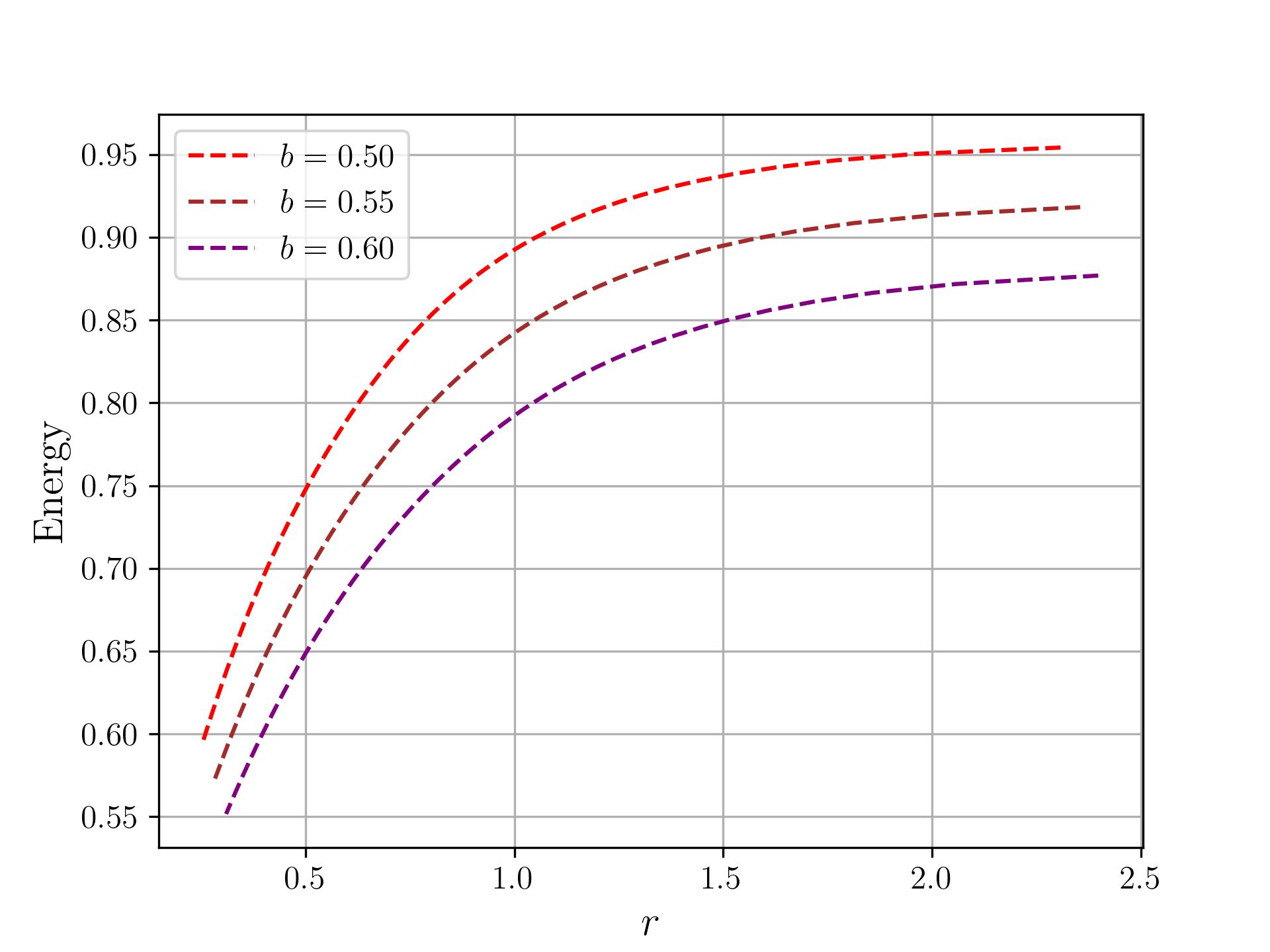}
	 	\caption{ The energy curve for $b$ fixed as a function of the elliptical coordinate.}
	 	\label{bfixedr0}
	 \end{figure}
\newpage 
\section{Conclusion}
Our analysis confirmed the presence of a bound state inside a Dirichlet waveguide with an elliptical window, irrespective of its dimensions. Its energy (eigenvalue) was shown to lie below the essential Dirichlet spectrum. The qualitative behavior of the energy curve was explained using numerical methods.

The present analysis focused on the properties of the ground-state energy. It would be instructive to study the properties of the corresponding wavefunction and higher-order states. Due to numerical challenges, this investigation is left for future work.


\begin{thebibliography}{10}
	
\bibitem[1]{abra}  M. Abramowitz and I. A. Stegun: \textsl{\textit{Handbook	of Mathematical Functions With Formulas, Graphs, and Mathematical Tables}} 	New York: Dover, (1972).
\bibitem[2]{najpop} A. S. Bagmutov, H. Najar, I. Y. Popov,  and I. F. Melikhov  \textsl{\textit{ On the discrete spectrum of a quantum waveguide with Neumann windows in presence of exterior field }} Phys. Chem. Math., 2022, {\bf{(13)}} (2), 156–163 .
\bibitem[3]{Borisov1} D. Borisov  and R. Gadyl'shin: \textsl{\textit{  On the spectrum of the Laplacian with fre- quently alternating boundary conditions}}   Teor. Mat. Fiz. {\bf (118) } 347 (in Russian) (1999).	
\bibitem[4]{Borisov2} D. Borisov: \textsl{\textit{ The asymptotics for the eigenelements of the Laplacian in a cylinder with frequently alternating boundary conditions }}  C. R. Acad. Sci. Paris Ser. II {\bf (329)} p 717-721 (2001).		
\bibitem[5]{Borisov3} D. Borisov : \textsl{\textit{On a model boundary value problem for Laplacian with frequently alternating type of boundary condition}} 	Asymptotic Analysis {\bf (35)} 1 (2003).
\bibitem[6]{Borisov4} D. Borisov: \textsl{\textit{ Asymptotics and estimates for the eigenelements of the Laplacian with frequently alternating non-periodic boundary conditions}}  Izv. RAN Ser. Mat. {\bf(67) }(6) 23 (in Russian) (2003).
\bibitem[7]{Borisov10} D. Borisov, P. Exner andR.  Gadyl'shin: \textsl{\textit{Geometric coupling thresholds in a two-dimensional strip }} J. Math. Phys. {\bf{(43)}} 2. p 6265-6278 (2002).
\bibitem[8]{exner1}  D. Borisov and P. Exner: \textsl{\textit{Exponential splitting of bound states in a waveguide with a pair of distant windows.}} J. Phys. A 37 n$^{\circ }$ 10, p3411-3428 (2004).
\bibitem[9]{Borisov13} D. Borisov D, T. Ekholm and H. Kova\v{r}{\'i}k: \textsl{\textit{ Spectrum of the magnetic Schr¨odinger operator in a waveguide with combined boundary conditions}}   Ann. Henri Poincar\'{e} {\bf (6)} 327 (2005).
\bibitem[10]{bulla1}  W. Bulla, F. Gesztesy, W. Renger, and B. Simon: \textsl{\textit{\ Weakly coupled Bound States in Quantum Waveguides.}} Proc. Amer. Math. Soc. \textbf{125} , no. 5, 1487--1495 (1997).
\bibitem[11]{Dowker1} J. S. Dowker, K. Kirsten K and P. B. Gilkey: \textsl{\textit{ On properties of asymptotic expansion of the heat trace for the N/D problem}}   Int. J. Math. {\bf (12)} p 505-517 (2001).
\bibitem[12]{Driscoll1} T. A. Driscoll and H. P. W. Gottlieb: \textsl{\textit{ Summation of perturbation series of eigenvalues and eigenfunctions of anharmonic oscillator}} Phys. Rev. E {\bf(68) } 016702 (2003).
\bibitem[13]{exner4}  P. Duclos and P. Exner: \textsl{\textit{Curvature-induced Bound States in Quantum waveguides in two and three dimensions }} Rev. Math. Phy. {(\textbf{{37})}} p 4867-4887 (1989).
\bibitem[14]{1}  P. Exner, P. \v{S}eba: \textsl{\textit{Bound states and scattering in quantum waveguides coupled laterally through a boundary window.}} J. Math. Phys. \textbf{(30)} n$^{\circ }$ 10, p 2574 (1989).
\bibitem[15]{exner2}  P. Exner, P. \v{S}eba, M. Tater, and D. Van\v{e}k: \textsl{\textit{Bound states and scattering in quantum waveguides coupled laterally through a boundary window.}} J. Math. Phys. \textbf{(37)} n$^{\circ }$ 10, p4867-4887 (1996).
\bibitem[16]{exner3}  P. Exner, S. A Vugalter: \textsl{\textit{ Asymptotic estimates for bound states in quantum waveguides coupled laterally through a narrow window.}} Ann. Inst. H. Poincare,{\bf{(65)}}, No.1, p 109–123 (1996).
\bibitem[17]{Exner6} P. Exner P and S. A.  Vugalter:  \textsl{\textit{ Bound-state asymptotic estimates for window-coupled Dirichlet strips and layers}} Jour. Phy. A {\bf(30) } 7863 (1997).			 	
\bibitem[18]{Gortinskaya1} Gortinskaya L V, Popov I Y, Tesovskaya E S and Uzdin V M 	\textsl{\textit{Electronic transport in the multilayers with very thin magnetic layers}}  Physica E {\bf (36)}  p. 12-16. (2007).
\bibitem[19]{Gre} D. S. Grebenkov, B.T. Nguyen \textsl{\textit{ Geometrical Structure of Laplacian Eigenfunctions }} SIAM REVIEW: Society for Industrial and Applied Mathematics	{\bf{(55)}}, No. 4, p. 601–667 (2013).	
\bibitem[20]{Gri}	A. Grigis and  J. Sjöstrand:  \textsl{\textit{  Microlocal Analysis for Differential Operators }} Cambridge University Press, (1994).
\bibitem[21]{Hez} H. Hezari, S. Zelditch, \textsl{\textit{Inverse spectral rigidity of elliptic domains}}
Inventiones Mathematicae {\bf(196)} , p. 1–80 (2014).
\bibitem[22]{Hirayama1}Y.  Hirayama, A. D. Wieck, T. Bever , vonK.  Klitzing and K. Ploog:  \textsl{\textit{Parallel in-plane-gated wires coupled by a ballistic window}}  Phys. Rev. B {\bf (46) } 4035 (1992).
\bibitem[23]{Holcman1} D. Holcman and Z. Schuss: \textsl{\textit{ Diffusion escape through a cluster of small absorbing windows}} Jour. Phy. A Mathematical and Theoretical {\bf (41)} 155001 (2008).
\bibitem[24]{hurt}  N. E. Hurt: \textsl{\textit{\ Mathematical Physics Of Quantum Wires and Devices}} Mathematics and its Application \textbf{(506)}Kluer Academic, Dordrecht, (2000).
\bibitem[25]{Jackson0} D. Jakobson: \textsl{\textit{ Classical Electrodynamics}} 3rd edn (New York: Wiley) sect. 3.13 (1999).
\bibitem[26]{Jakobson1} D. Jakobson, M. Levitin, N. Nadirashvili  and I.  Polterovich: \textsl{\textit{ Spectral problems with mixed Dirichlet-Neumann boundary conditions: isospectrality and beyond}}  J. Comp. Appl. Math. {\bf (194)} 141 (2006).
\bibitem[27]{Stokel}  F. Kleespies and P. Stollmann: \textsl{\textit{Lifshitz Asymptotics and Localization for random quantum waveguides.}} Rev. Math. Phy. \textbf{(12)} p 1345-1365 (2000).
\bibitem[28]{speis1}  A. Klein; J. Lacroix, and A. Speis, Athanasios: \textsl{\textit{\ Localization for the Anderson model on a strip with singular potentials.}} J. Funct. Anal. \textbf{(94)} n$^{\circ }$ 1,
p. 135-155 (1990).
\bibitem[29]{Kond} V. A. Kondrat’ev: \textsl{\textit{Boundary value problems for elliptic equations in domains with conical or angular points}} Trans. Moscow Math. Soc. {\bf{(16)}}, p. 227–313, (1967).
\bibitem[30]{Krejcirik1} D. Krej\v{c}i\v{r}\'{\i}k  and J. K\v{r}\'{\i}\v{z}: \textsl{\textit{ On the spectrum of curved quantum waveguides.}}  Publ. Res. Inst. Math. Sci  {\bf (41)}  p 757-791 (2005).	
\bibitem[31]{Levitin1} M. Levitin,  L.Parnovski and I. Polterovich : \textsl{\textit{ Isospectral domains with mixed boundary conditions}}  J. Phy. A: Math. Theo.  {\bf (39)} 2073 (2006).	
\bibitem[32]{Lach}	N.W. McLachlan: \textsl{\textit{Theory and Application of Mathieu Functions}}, Oxford Univ. Press, (1947).
\bibitem[33]{naj6}  H. Najar: \textsl{\textit{Lifshitz tails for acoustic	waves in random quantum waveguide}} Jour. Stat. Phy. Vol 128 No 4, p1093-1112 (2007).
\bibitem[34]{Najar1} H. Najar, S. Ben Hariz and M. Ben Salah:  \textsl{\textit{On the Discrete Spectrum of a Spatial Quantum Waveguide with a Disc Window}} Math. Phys. Anal. Geom. {\bf (13)} 19 (2010).	
\bibitem[35]{najarolend} H Najar and O Olendski: \textsl{\textit{Spectral and localization properties of the Dirichlet wave guide with two concentric Neumann discs }} J. Phys. A: Math. Theor. {\bf{44}} 305304 (2001).
\bibitem[36]{najrandom} H. Najar: \textsl{\textit{  Lifshitz Tails for Quantum Waveguides with Random Boundary Conditions}}.   Math. Phys. Anal. Geom. {\bf{(22)}}, 17 (2019).
\bibitem[37]{naz}  S. A. Nazarov and M. Specovius-Neugebauer: \textsl{\textit{Selfadjoint extensions of the Neumann Laplacian in domains with cylindrical outlets.}} Commu. Math. Phy. \textbf{185} p 689-707 (1997).
\bibitem[38]{Olendski2} O. Olendski and L. Mikhailovska: \textsl{\textit{  A straight quantum wave guide with mixed Dirichlet and Neumann boundary conditions in uniform magnetic fields}} Jour. Phy. A: Mathematical and Theoretical  {\bf (40)} 4609 ( 2007).
\bibitem[39]{Olendski3}O. Olendski and L. Mikhailovska: \textsl{\textit{ Analytical and numerical study of a curved planar waveguide with combined Dirichlet and Neumann boundary conditions in a uniform magnetic field}}  Phy. Rev. B,  {\bf (77)} 174405 (2008)
\bibitem[40]{Prange1} R. E. Prange,  E. Ott, T. M. Antonsen, B. Georgeot and R. Bl\"{u}mel: \textsl{\textit{Ray splitting and quantum chaos}}  Phys. Rev. E  {\bf (53)} 207 (1996).
\bibitem[41]{Resi}  M. Reed and B. Simon: \textsl{\textit{Methods of Modern	Mathematical Physics}} Vol. IV: Analysis of Operators. Academic, Press, (1978).
\bibitem[42]{Seeley1} R. Seeley:  \textsl{\textit{ Trace expansions for the Zaremba problem}}  Commun. Partial Differ. Equ. {\bf (27)} 2403 (2002).
\bibitem[43]{Wiersig1}J.  Wiersig: \textsl{\textit{ Spectral properties of quantized barrier billiards}} 2002 Physical Review E{\bf (65)} 046217 (2002).
\bibitem[44]{wat}  G. N. Watson: \textsl{\textit{A Treatise On The Theory of Bessel Functions}} Cambridge University Press.
\bibitem[45]{Zaremba1} S. Zaremba: \textsl{\textit{Sur un problème mixte relatif à l'équation de Laplace}} Bull. Intern. Acad. Sci. Cracovie 314 (1910).
\end{thebibliography}
\end{document}